\newtheorem{theorem}{Theorem}
\newtheorem{lemma}{Lemma}
\newenvironment{proof}[1][Proof]{\begin{trivlist}
\item[\hskip \labelsep {\bfseries #1}]}{\end{trivlist}}
\title{Counting Shortest Two Disjoint Paths in Cubic Planar Graphs with an NC Algorithm}
\author{Andreas Bj\"orklund and Thore Husfeldt}
\date{}
\begin{document}
\maketitle
\begin{abstract}
Given an undirected graph and two disjoint vertex pairs $s_1,t_1$ and $s_2,t_2$,  the Shortest two disjoint paths problem (S2DP)
asks for the minimum total length of two vertex disjoint paths connecting $s_1$ with $t_1$, and $s_2$ with $t_2$, respectively.

We show that for cubic planar graphs there are NC algorithms, uniform circuits of polynomial size and polylogarithmic depth, that compute the S2DP and moreover also output the number of such minimum length path pairs.

Previously, to the best of our knowledge, no deterministic polynomial time algorithm was known for S2DP in cubic planar graphs with arbitrary placement of the terminals. In contrast, the randomized polynomial time algorithm by Bj\"orklund and Husfeldt, ICALP 2014, for general graphs is much slower, is serial in nature, and cannot count the solutions.

Our results are built on an approach by Hirai and Namba, Algorithmica 2017, for a generalisation of S2DP, and fast algorithms for counting perfect matchings in planar graphs.
\end{abstract}

\section{Introduction}
\emph{Shortest disjoint $A,B$-paths}, introduced by Hirai and Namba~\cite{HN17}, is the following problem: Let $G=(V,E)$ be an undirected graph with two non-empty disjoint vertex subsets $A,B\subseteq V$ of even size and an edge length function $\ell\colon E\rightarrow \{1,\ldots,L\}$.
An edge subset $E'\subseteq E$ is a solution to \emph {Disjoint $A,B$-paths} if it consists of $\frac{1}{2}(|A|+|B|)$ disjoint paths with endpoints both in $A$ or both in $B$. 
The length $\ell(E')$ of a solution is $\sum_{e\in E'} \ell(e)$, and a \emph{shortest} solution has length $\ell_{A,B}=\operatorname{min}_{E'}\ell(E')$.
The objective is to compute $\ell_{A,B}$. 
The special case $|A|=|B|=2$ is a well-studied problem called \emph{Shortest two disjoint paths}.

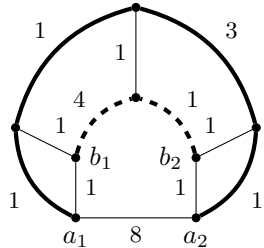
\begin{figure}[!hb]
\centering\begin{tikzpicture}[scale=.4, auto]
    \tikzstyle{node_style} = [circle,draw=black,fill=black, inner sep=0pt, minimum size=3pt]
    \tikzstyle{edge_style} = [draw=black, line width=2, thin]
    \tikzstyle{es1_style} = [draw=black, line width=2, ultra thick]
    \tikzstyle{es2_style} = [draw=gray, dashed, line width=2, ultra thick]

    \node[node_style] (v1) at (0,4) {};
    \node[node_style] (v2) at (0,1) {};
    \node[node_style] (v3) at (-4,0) {};
    \node[node_style] (v4) at (4,0) {};
    \node[node_style, label=right:$b_1$] (v5) at (-2,-1) {};
    \node[node_style, label=left:$b_2$] (v6) at (2,-1) {};
    \node[node_style, label=below:$a_1$] (v7) at (-2,-3) {};
    \node[node_style, label=below:$a_2$] (v8) at (2,-3) {}; 
    
    \begin{pgfonlayer}{bg} 
    \draw[edge_style]  (v1) edge node[swap] {\small $1$} (v2);
    \draw[es1_style]    (v1) edge [bend right]  node[swap] {\small $1$} (v3);
    \draw[es1_style]    (v1) edge [bend left] node {\small $3$} (v4);
    \draw[edge_style]  (v3) edge node {\small $1$} (v5);
    \draw[es2_style]    (v5) edge [bend left] node {\small $4$} (v2);
    
    \draw[edge_style]  (v4) edge node[swap] {\small $1$} (v6) ;
    \draw[es2_style]    (v6) edge [bend right] node[swap] {\small $1$} (v2);
    \draw[es1_style]    (v3) edge [bend right] node[swap] {\small $1$} (v7); 
    \draw[es1_style]  (v4) edge [bend left] node {\small $1$} (v8);

    \draw[edge_style]  (v5) edge node {\small $1$} (v7);
    \draw[edge_style]  (v6) edge node[swap] {\small $1$} (v8);
    \draw[edge_style]  (v7) edge node[swap] {\small $8$} (v8); 
    \end{pgfonlayer}
    \end{tikzpicture}
    \caption{A solution of minimum length $\ell_{A,B}=11$ to Shortest disjoint $A,B$-paths with $A=\{a_1,a_2\}$ and $B=\{b_1, b_2\}$.
  Since $|A|=|B|=2$, this is also an example of Shortest two disjoint paths. Note that neither path is a shortest path between its terminals.}
    \label{fig: Ex}
\end{figure}

We write $S_{A,B}$ for the number of solutions of length $\ell_{A,B}$.
A graph is \emph{cubic} (sometimes called 3-regular) if every vertex has degree 3.
We prove the following:

\begin{theorem}
\label{thm: alg}
For any cubic planar $n$-vertex graph $G=(V,E)$, disjoint vertex subsets $A$ and $B$, and edge length function $\ell:E\rightarrow \{1,\ldots, L\}$, we can compute $\ell_{A,B}$ and $S_{A,B}$ in deterministic $\tilde{O}(2^{|A\cup B|}n^{\omega/2+2}L^2)$ time\footnote{The $\tilde{O}(f(n))$ notation suppresses factors polylogarithmic in $f(n)$.}, where $\omega<2.373$ is the exponent of square matrix multiplication.
\end{theorem}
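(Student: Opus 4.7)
The plan is to reduce shortest disjoint $A,B$-paths on cubic planar $G$ to a family of $2^{|A \cup B|}$ weighted perfect-matching counts on auxiliary planar graphs, and then solve each instance in NC via the Kasteleyn/FKT Pfaffian combined with nested dissection and polynomial coefficient arithmetic.

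I would first follow Hirai and Namba~\cite{HN17} and rewrite the set of shortest $A,B$-path packings as a sum indexed by a family of size at most $2^{|A \cup B|}$ of boundary data attached to $A \cup B$ (such as pairings, parity vectors, or an inclusion--exclusion family), so that for each fixed boundary the compatible packings are counted by a weighted hafnian of a matrix built from $G$. These $2^{|A \cup B|}$ subproblems are independent and can be handled in parallel; their contributions are then summed, producing the $2^{|A \cup B|}$ factor in the runtime.

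Next, for each fixed boundary I would realise the hafnian as a Pfaffian by exploiting the cubic--planar structure. In a cubic graph, perfect matchings are in bijection with $2$-factors via complementation, and attaching constant-size planar gadgets at each vertex of $A \cup B$ converts the boundary assignment into local matching constraints; this yields an auxiliary cubic planar graph $G'$ of size $O(n)$ whose weighted perfect-matching generating polynomial (with edge weight $x^{\ell(e)}$ reduced modulo $x^{nL+1}$) agrees, up to the Hirai--Namba identification, with the length-generating polynomial of the $A,B$-path packings for that boundary. Since $G'$ is planar, a Kasteleyn orientation is computable in NC by parallel processing of the faces of a planar embedding, and the resulting Pfaffian reduces to a determinant. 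An NC nested-dissection determinant algorithm combined with polynomial arithmetic via interpolation at $O(nL)$ points delivers each subproblem in $\tilde O(n^{\omega/2+2}L^2)$ work and polylogarithmic depth; summing the $2^{|A \cup B|}$ subproblem polynomials and reading off the lowest-degree nonzero coefficient then yields $\ell_{A,B}$ together with $S_{A,B}$.

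The principal obstacle will be the second step: designing planar gadgets so that perfect matchings of $G'$ are in weight-preserving bijection with the Hirai--Namba indexed packings, while keeping $G'$ cubic planar of size $O(n)$ so that the cost estimate from nested dissection survives. Once this matching reformulation is in place, the NC counting and weight tracking follow from the standard FKT and parallel nested-dissection pipeline, and the final runtime accounting is routine.
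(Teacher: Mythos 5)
Your outer pipeline (reduce to exponentially many planar perfect-matching counts, Kasteleyn orientation, determinant via nested dissection in $O(n^{\omega/2})$ arithmetic operations on $\tilde O(nL)$-bit integers, interpolation over $O(nL)$ points, total $\tilde O(2^{|A\cup B|}n^{\omega/2+2}L^2)$) matches the paper. But the step you yourself flag as ``the principal obstacle'' --- designing planar gadgets so that, for each fixed boundary datum, the perfect matchings of a single auxiliary cubic planar graph $G'$ are in weight-preserving bijection with the compatible path packings --- is precisely the paper's main contribution, and you have not supplied it. Worse, the bijection you posit is almost certainly not achievable in that form: Hirai and Namba's construction (which essentially joins terminals within $A$ and within $B$) destroys planarity, and a hafnian does not in general ``realise as a Pfaffian''; the complementation between perfect matchings and $2$-factors in cubic graphs does not by itself encode the constraint that paths must join $A$ to $A$ or $B$ to $B$ but never $A$ to $B$. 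Without a concrete gadget construction and a proof of the claimed correspondence, the proof does not go through.

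What the paper actually does is structurally different from a per-boundary bijection: it is an annihilation sieve. Each vertex of $G$ is replaced by a triangle (nonterminals) or a $3$-star with a terminal center $h(z)$ (terminals), giving one planar graph $H$; for each $X\subseteq Z=A\cup B$ one deletes the centers $h(z)$ for $z\notin X$ to get $H(X)$, and one forms
\[
p(G,s)=\sum_{X\subseteq Z}(-1)^{|X\cap A|}\operatorname{pm}(H(X))\cdot\operatorname{pm}(H(Z\setminus X)),
\]
a \emph{product of two Pfaffians} per term, not one. The union $M_1\cup M_2$ of a matching from each factor decomposes into doubled edges, cycles, and paths whose endpoints are exactly the terminal centers; a sign-reversing involution (swap the two matchings along the lexicographically first path joining an $A$-center to a $B$-center, and update $X$ accordingly) cancels every term containing a forbidden $A$--$B$ connection, while each genuine solution survives with multiplicity $2^{|Z|/2}$. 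Note also that the answer is read off the \emph{highest}-degree surviving monomial ($\ell_{A,B}=2\Lambda-d$), not the lowest, because unused external edges appear doubled in $M_1\cup M_2$ and contribute their weight twice. So your proposal is not a correct proof as it stands: it names the right toolbox but leaves the central planarity-preserving cancellation argument unconstructed, and the bijective route it sketches in its place is not the mechanism that makes the theorem true.
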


In particular, for $|A|+|B|=O(1)$, the algorithm runs in deterministic time $\tilde{O}(n^{\omega/2+2}L^2)$.

To the best of our knowledge, no polynomial-time deterministic algorithm was known even for $|A|=|B|=2$.
Hirai and Namba's algorithm \cite{HN17} works for general graphs in randomized time $n^{O(|A\cup B|)}$, so Thm.~\ref{thm: alg} also shows that cubic planar graphs allow better exponential dependency on $|A\cup B|$. 

We focus on the algorithmically interesting cubic case; Sec.~\ref{sec: max degree 3} shows that all our algorithms extend to the case where the graph has maximum degree $3$.

\medskip
Because we can count the solutions we can use well-known techniques to retrieve a witness for the shortest length.
By using our algorithm as a subroutine, we can retrieve the $i$th witness in a lexicographical order of the solutions by a polynomial overhead self reduction, by peeling off edges one at a time and remeasuring the number of solutions. In particular, by choosing $i$ uniformly from $\{1,\ldots, S_{A,B}\}$, we can sample uniformly over the solutions without first explicitly constructing the list of solutions.

\medskip
Our algorithm is based on counting perfect matchings in a planar graph. Vazirani~\cite{V89} showed how every bit in the number of perfect matchings in planar graphs can be decided by an NC algorithm, i.e., uniform polylogarithmically shallow polynomial size circuits, an observation he attributes to Mike Luby. By using his algorithm as a subroutine, we present an efficient parallel algorithm, which we state here for the special case of  Shortest two disjoint paths:

\begin{theorem}
\label{thm: nc}
For any cubic planar $n$-vertex graph $G=(V,E)$, disjoint vertex subsets $A$ and $B$ with $|A|=|B|=2$, and edge length function $\ell:E\rightarrow \{1,\ldots, L\}$, we can compute $\ell_{A,B}$ and $S_{A,B}$ by an NC algorithm.
\end{theorem}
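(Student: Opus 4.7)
The plan is to parallelize the sequential algorithm of Theorem~\ref{thm: alg} for the case $|A|=|B|=2$ by exhibiting an NC implementation of each of its subroutines. Since $|A\cup B|=4$ is constant, the outer enumeration over $2^{|A\cup B|}$ configurations contributes only a constant factor and is trivially parallel. Within each configuration, the Hirai-Namba-style reduction specialised to the cubic planar case produces an auxiliary planar graph by local modifications of $G$ (constant-size gadgets at the terminals together with edge-weight reassignments). A planar embedding of $G$ is computable in NC by Klein and Reif, and the remaining local edits are trivially NC, so the reduction itself lies in NC.

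After the reduction, the task becomes counting weighted perfect matchings in a planar graph, which I would parallelise via polynomial interpolation. The generating polynomial $f(x)=\sum_{M} x^{w(M)}$ over perfect matchings $M$ has degree at most $nL$. Evaluating $f$ at $nL+1$ distinct integer points and interpolating recovers $f$ exactly; interpolation amounts to solving a Vandermonde system and is in NC. Each evaluation is an instance of counting perfect matchings in a planar graph with integer edge weights, which lies in NC by Vazirani's theorem: a Kasteleyn orientation of the planar graph is NC-computable, the weighted signed sum of perfect matchings is a Pfaffian, and Pfaffians reduce to determinants, which admit NC algorithms.

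From the coefficients of $f$ we read off the smallest exponent with a nonzero coefficient as the contribution to $\ell_{A,B}$ from this configuration and the corresponding coefficient as the contribution to $S_{A,B}$; aggregating across the constantly many configurations by a parallel minimum yields the theorem.

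The main obstacle is verifying that the Hirai-Namba reduction, specialised to cubic planar inputs as in the sequential algorithm of Theorem~\ref{thm: alg}, admits a purely \emph{local} description -- so that a planar auxiliary graph is obtained by NC-computable per-edge and per-terminal modifications, and the correspondence between its weighted perfect matchings and solutions of S2DP in $G$ is weight-preserving. Since Theorem~\ref{thm: alg} already factors through planar matching counting, this should be essentially inherited from that construction, and the remainder of the pipeline is standard NC linear algebra and polynomial interpolation.
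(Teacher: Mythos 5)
Your overall pipeline coincides with the paper's: parallelise the sequential algorithm of Theorem~\ref{thm: alg} by noting that the loops over terminal subsets and over interpolation points are embarrassingly parallel, and that every subroutine---planar embedding and Pfaffian orientation (Klein--Reif, Vazirani), determinants (Csanky or Berkowitz), arithmetic on $\tilde O(nL)$-bit integers, and polynomial interpolation (Cook)---is in NC. Those ingredients are correct and are exactly what the paper invokes.

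There is, however, a step in your extraction of the answer that would fail. A ``configuration'' $X\subseteq Z$ does not by itself produce a planar graph whose weighted perfect matchings are in weight-preserving bijection with solutions of S2DP: it produces a \emph{pair} of graphs $H(X)$ and $H(Z\setminus X)$, and the relevant per-configuration quantity is the product $\operatorname{pm}(H(X))\cdot\operatorname{pm}(H(Z\setminus X))$. This product also counts matching pairs whose union contains a path joining a terminal of $A$ to a terminal of $B$; these are not solutions, and they can contribute monomials of extremal degree in the per-configuration polynomial. They disappear only in the alternating sum $\sum_{X\subseteq Z}(-1)^{|X\cap A|}\bigl|\sqrt{\det(D(X,s))\det(D(Z\setminus X,s))}\bigr|$, where a fixed-point-free involution cancels them in pairs while each genuine solution survives with multiplicity $2^{|Z|/2}$. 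So you must form this signed, coefficient-wise sum of all $2^{|Z|}$ polynomials \emph{before} locating the extremal monomial; reading off an exponent and coefficient per configuration and then ``aggregating by a parallel minimum'' gives the wrong answer. (This does not affect membership in NC---the signed sum is a constant-size addition tree for $|Z|=4$---but it is the heart of why the reduction is correct, and it is also why the recent NC algorithms for \emph{finding} one planar perfect matching do not apply here.) Two smaller points: the gadgets are placed at every vertex of $G$ (triangles at nonterminals, $3$-stars at terminals), not only at the terminals; and since the determinant yields the square of the weighted matching count, you also need an NC integer square root, e.g.\ Newton iteration, which converges in $O(\log (nL))$ rounds on these integers.
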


The same statement holds as long as $|A|+|B|$ is logarithmic in $n$.

\medskip
Via the Isolation lemma of Mulmuley, Vazirani, and Vazirani~\cite{MVV87} we can also obtain a witness, i.e., a solution $E'$ of length $\ell(E') = \ell_{A,B}$, with a \emph{randomized} NC algorithm. We note that the recent breakthrough result showing how to find a perfect matching in a planar graph in NC by Anari and Vazirani~\cite{AV17} doesn't seem to be directly applicable to our problem. Our algorithm counts the solutions to Shortest two disjoint paths by an annihilation sieve, i.e. the number of solutions is an alternating sum of perfect matchings in a set of graphs, but many of the terms will cancel each other. Hence there are many perfect matchings that do not correspond to a solution. Finding one unconditionally won't help us.

\medskip
We also provide some evidence that the exponential dependence on $|A|+|B|$ in the running time is probably necessary:

\begin{theorem}
\label{thm: nph}
For any cubic planar graph and two disjoint vertex subsets $A$ and $B$, it is \#P-hard to simultaneously compute the length and the number of solutions to Shortest disjoint $A,B$-paths.
\end{theorem}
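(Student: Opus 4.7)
\emph{Proof plan.} The plan is to reduce from a known \#P-hard counting problem on cubic planar graphs. Natural candidates, whose decision versions are NP-hard by classical results of Garey, Johnson, and Tarjan, include counting Hamiltonian cycles, counting maximum independent sets, or counting minimum vertex covers in cubic planar graphs; the NP-hardness of the decision version lifts to \#P-hardness of the counting version by standard parsimonious reductions. I would pick, say, counting independent sets in a cubic planar graph $H$ as the source.

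First I would construct, from $H$, a cubic planar graph $G$ with edge lengths $\ell\colon E(G)\to\{1,\ldots,L\}$ and terminal sets $A,B$ by substituting each vertex of $H$ with a small \emph{vertex gadget} and each edge with an \emph{edge gadget}. Each vertex gadget would contribute its own fresh terminal pair to $A$ (or $B$) and admit exactly two shortest internal routings of equal length, encoding the binary choice ``selected / not selected''. Edge gadgets would contribute further terminal pairs whose shortest routings become incompatible precisely when the source constraint is violated (e.g., when two adjacent vertices of $H$ are both selected). With this layout, $|A\cup B|=\Theta(|V(H)|)$, which is consistent with (and in fact explains) the exponential dependence on $|A\cup B|$ allowed in Theorem~\ref{thm: alg}.

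Second, I would tune the edge lengths so that any globally consistent collection of routings has the same total length $L^*$, and any collection encoding a violated constraint has length strictly greater than $L^*$. The shortest disjoint $A,B$-paths in $G$ are then in bijection with the independent sets of $H$, so $S_{A,B}=\#\mathrm{IS}(H)$ and the problem of simultaneously outputting $(\ell_{A,B},S_{A,B})$ is \#P-hard.

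The hard part will be the gadget design. The gadgets must (i) be cubic and planar, and remain so when glued along a planar embedding of $H$, (ii) admit exactly the intended two shortest routings internally, and (iii) be \emph{rigid}, meaning no shortest global routing can arise by merging or rerouting paths across gadget boundaries in an unintended way. The rigidity check is the delicate step: it requires a global length analysis ruling out every spurious minimum-length configuration, and it must be compatible with the cubicity constraint at the interface vertices between adjacent gadgets, which typically forces careful adjustment of the gadget boundary so that every interface vertex ends up with degree exactly $3$ in $G$.
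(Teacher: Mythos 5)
Your high-level strategy matches the paper's: the actual proof reduces from counting maximum independent sets in cubic planar graphs (citing Vadhan's \#P-hardness result for that problem directly, rather than your ``NP-hardness lifts to \#P-hardness by standard parsimonious reductions'' step, which is not automatic and is exactly the kind of claim that needs a citation or a construction), and it likewise uses a vertex gadget and an edge gadget glued along a planar embedding, with fresh $A$-terminals on vertex gadgets and $B$-terminals on edge gadgets. But your proposal stops precisely where the proof has to begin: the gadgets are never constructed, and the ``rigidity check'' you correctly identify as the delicate step is deferred entirely. As written this is a plan, not a proof.

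Moreover, the specific plan contains a concrete obstacle you have not addressed. You want every globally consistent routing to have the same total length $L^*$, so that the shortest solutions are in bijection with the independent sets of $H$ and $S_{A,B}=\#\mathrm{IS}(H)$. Consider an edge $uv$ of $H$ with neither endpoint selected: its edge gadget's terminal pair can be routed through either of the two adjacent vertex gadgets, and by the symmetry of any such construction these two detours will generically have equal length, so the configuration is counted $2$ times (and a full solution $2^{k}$ times, where $k$ is the number of edges with no selected endpoint). If you break the tie by lengthening one detour, the total length of a solution starts to depend on which endpoints of which edges are selected, destroying the ``all consistent routings have length $L^*$'' property. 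The paper resolves this tension differently: the two routings inside a vertex gadget have \emph{different} lengths ($12$ for the single long edge versus $11$ for the path around the $8$-cycle, the latter occupying the cycle and blocking adjacent edge gadgets), so the total length is $12|V|+3|E|-|I|$ and only \emph{maximum} independent sets achieve $\ell_{A,B}$; and it accepts the non-parsimony, showing each maximum independent set is counted exactly $2^{|E|-3\alpha(G)}$ times (in a cubic graph an independent set $I$ touches exactly $3|I|$ edges), a factor that can be divided out once $\alpha(G)$ is recovered from $\ell_{A,B}$ --- which is also exactly why the theorem asks for the length and the count \emph{simultaneously}. You would need either to exhibit gadgets realizing your stronger bijection claim, or to retreat to this weaker, factor-corrected accounting.
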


\subsection{Hirai and Namba's Result}
Hirai and Namba~\cite{HN17} shows that Shortest disjoint $A,B$-paths has a randomized algorithm running in $n^{O(|A\cup B|)}$ time, that w.h.p. finds the length of the shortest disjoint paths. Their algorithm is inspired by the algorithm of Gallai~\cite{G61} that can be used to address the special case $B=\emptyset$, and the algorithm by Bj\"orklund and Husfeldt~\cite{BH14} for the special case $|A|=|B|=2$.
They apply a two-step method. First it expands the input graph $G$ into another edge weighted graph $G'$ using so-called Gallai paths, and argues that the (weighted) perfect matchings in $G'$ can be used to obtain the solution to the original problem. Second, it uses the fact that counting perfect matchings in $G'$ modulo $2^k$ has a $n^{O(k)}$ time algorithm. By design of their reduction, the solutions will be counted $2^{|A\cup B|/2}$ times each, so they need to set $k>\frac{1}{2}|A\cup B|$ to count something meaningful, but still set it small to keep the running time down. This means the algorithm only is capable of counting the solutions modulo a fixed small power of two. They need to use the Isolation lemma~\cite{MVV87} to make sure the solutions count a number of times that is not divisible by this small power of two. This is why they need randomness, and the same problem preventing a deterministic algorithm is present in the earlier Shortest two disjoint paths algorithm by Bj\"orklund and Husfeldt~\cite{BH14}.

\subsection{Our Approach}
We apply Hirai and Namba's approach to the planar cubic case. It is well-known that in any planar graph we can count the perfect matchings in polynomial time. In particular we don't just obtain the result modulo a small power of two. We will use this, but there is one obstacle that needs to be addressed to accomplish this: The reduction Hirai and Namba use does not preserve planarity.
Our contribution is to show that for cubic planar graphs, we can construct a set of $2^{|A\cup B|/2}$ cubic planar graphs, each having non-negative edge weights, so that a linear combination of the number of weighted perfect matchings in these graphs can be used to deduce the number of solutions to the Shortest disjoint $A,B$-paths in the original instance. We are inspired by the result of Galluccio and Loebl~\cite{GL99} that shows how to count perfect matchings in graphs of genus $g$ by constructing $4^g$ orientations and computing the Pfaffian for each of them. We choose to use a more direct approach instead of reducing to their result to make the description of our algorithm more self contained.  

\subsection{Related Work}
Bj\"orklund and Husfeldt~\cite{BH14} showed that Shortest two disjoint paths in a general unweighted undirected graph has a polynomial time Monte Carlo algorithm.
Colin de Verdi\'ere and Schrijver~\cite{CS11}, and Kobayachi and Sommer~\cite{KS10} showed that for planar graphs, deterministic polynomial-time algorithms for the Shortest two disjoint paths exist if the four terminals lie on the boundary of at most two faces. Our new algorithm works for all cubic planar cases, but is much slower. Still it is significantly faster than the general $O(n^{11})$ time algorithm by Bj\"orklund and Husfeldt~\cite{BH14}. 

Very recently, Datta et al.~\cite{DIKM18} presented a deterministic algorithm independently of ours for Shortest $k$-disjoint paths in planar graphs conditioned on the terminals either all being placed on the same face or all source terminals on one face and the target terminals on another. Interestingly, their algorithm is also based on computing determinants just as ours and is capable of counting the solutions just as our algorithms can, although they don't use Pfaffian orientations as we do. In particular, restricted to the Shortest two disjoint path problem, their algorithm does not solve the case when the four terminals are not all incident to at most two faces.

If we only want to decide if two disjoint paths joining given vertex pairs exist, no matter their length, deterministic polynomial-time algorithms have been known since 1980 for general graphs, by Ohtsuki~\cite{O80}, Seymour~\cite{Se80}, Shiloah~\cite{Sh80}, and Thomassen~\cite{T80}; all published independently. Tholey~\cite{T06} reduced the running time for that problem to near-linear. Khuller, Mitchell, and Vazirani~\cite{KMV92} showed that the problem can be solved in NC.

The $k$-disjoint paths problem is the natural generalisation of the two disjoint paths problem:
Given a list $\{(s_1,t_1), \dots, (s_k, t_k)\}$ of terminal pairs,  decide if there exist $k$ disjoint paths connecting $s_i$ with $t_i$ for $i\in \{1,\ldots,k\}$.
Again neglecting the length of the solution, this problem has a
polynomial time algorithm in general graphs for fixed $k$, but the dependence on $k$ is horrible ($\operatorname{exp}\operatorname{exp}\operatorname{exp}\operatorname{exp}O(k)$, see~\cite{KW10}). For planar graphs, there exists a doubly exponential
$(\operatorname{exp}\operatorname{exp}O(k))\operatorname{poly}(n)$ time algorithm, by Adler et al. \cite{AKKLST11}. 
 For comparison, our running time dependence is singly exponential in the size of the terminal set, but of course our criteria for allowed connections is much relaxed.
 
The special case $B=\emptyset$ in Disjoint $A,B$-paths is referred to as $A$-Paths in Lov\'asz and Plummer~\cite{LP86}. Its solution in general undirected graphs by a polynomial time algorithm was given by Gallai~\cite{G61} by a reduction to finding a perfect matching. Using Mulmuley, Vazirani, and Vazirani's algorithm for the problem they call Exact Matching~\cite{MVV87} on Gallai's construction, one can in randomized polynomial time solve the Shortest disjoint $A$-paths.

The idea of using fast perfect matching counting in restricted graph classes to solve other combinatorial optimisation problems is not new, a prominent example is the polynomial time algorithm for Maximum Cut in an unweighted graph of bounded genus by Galluccio, Loebl, and Vondr\' ak \cite{GLV01}.

\section{Algorithmic Results}
In this section we will prove Thm.~\ref{thm: alg} and~\ref{thm: nc}.

\subsection{Notation}

A $(u,v)$-path is a path from vertex $u$ to vertex $v$.

A \emph{perfect matching} in an undirected graph $G=(V,E)$, is a subset $E'\subseteq E$ of the edges of size $|E'|=\frac{1}{2}|V|$, such that every vertex in $v\in V$ is the endpoint of exactly one edge in $E'$. Let $w\colon E\rightarrow \mathbf{N}$ be an edge weight function to positive integers. Let $\mathscr{M}(G)$ be the family of perfect matchings in $G$.
We denote by $\operatorname{pm}(G)$ the sum of the weighted perfect matchings in a graph $G$, i.e.,
\[
\operatorname{pm}(G)=\sum_{M\in \mathscr{M}(G)} \prod_{e\in M} w(e).
\]
If the weights are unity, this is the number of perfect matchings.
In our algorithm's analysis, some edges will be weighted by an indeterminate $s$ and $\operatorname{pm}(G)$ will be a polynomial in $s$. However, the algorithm itself will only work directly over the integers after replacing the indeterminate $s$ for a numerical value.

\subsection{Pfaffian Orientations}
A \emph{Pfaffian orientation} of a graph $G=(V,E)$ with edge weights $w\colon E\rightarrow \mathbf{N}$, is an orientation of the edges $q\colon E\rightarrow \{-1,1\}$ so that the skew-symmetric adjacency matrix $A_G$, where 
\begin{equation*}
\forall uv\in E,u<v\colon  q(u,v)w(uv)=A_G(u,v)=-A_G(v,u)=-q(u,v)w(uv),
\end{equation*}
satisfies
\begin{equation}
\label{eq: pfaff}
\operatorname{pm}(G)=\left|\sqrt{\det(A_G)}\right|.
\end{equation}
An orientation of a graph $G$ is Pfaffian if and only if every even-length cycle $C$ such that $G\setminus V(C)$ has a perfect matching, has an odd number of edges directed in either direction along $C$.
Kasteleyn~\cite{K57}, famously proved that all planar graphs have a Pfaffian orientation, and moreover showed how you given a planar graph can find a Pfaffian orientation fast. Nowadays it is even known how to find one in planar graphs in linear time, and Vazirani~\cite{V89} showed it can be computed in NC. 
In general it only holds that $\left|\operatorname{pm}(G)\right|=\left|\sqrt{\det(A_G)}\right|$, but we will only consider positive edge weights in this paper and hence already know $\operatorname{pm}(G)$ to be non-negative.
Little~\cite{L74} extended Kasteleyn's method to also work constructively for graphs that do not have a $K_{3,3}$ subgraph as a minor.
However, cubic $K_{3,3}$ minor free graphs coincide with the set of cubic planar graphs.

\subsection{Reduction from Disjoint A,B-Paths to Counting Perfect Matchings}

Consider as input a cubic planar graph $G$  and let $\ell\colon E\rightarrow \{1,\ldots,L\}$ be an edge length function, along with two disjoint subsets $A$ and $B$ of the vertices, each having even size. Set $\Lambda=\sum_{e\in E} \ell(e)$. We will reduce Shortest disjoint $A,B$-paths to counting perfect matchings so that planarity is preserved.
In this section, we will write $Z$ for the set of terminals, $Z=A\cup B$.

We first construct a larger graph $H$ from $G$ as follows.

Replace each nonterminal $v\in V\setminus Z$ with three vertices $h_1(v)$, $h_2(v)$, and $h_3(v)$ forming a triangle:
\[ \begin{tikzpicture}[scale =.5]
    \node (v_1) [circle, fill, inner sep =1pt, label= 90:{$h_1(v)$}] at (90:1cm) {};
    \node (v_2) [circle, fill, inner sep =1pt, label=210:{$h_2(v)$}] at (210:1cm) {};
    \node (v_3) [circle, fill, inner sep =1pt, label=330:{$h_3(v)$}] at (330:1cm) {};
    \draw (v_1)--(v_2);
    \draw (v_2)--(v_3);
    \draw (v_3)--(v_1);
\end{tikzpicture}
\]
Replace each terminal $z\in Z$ by a $3$-star on vertices $h_1(z)$, $h_2(z)$, $h_3(z)$, and \emph{terminal center} $h(z)$: 
\[ \begin{tikzpicture}[scale =.5, every pin/.style={pin edge = {<-,black, shorten <= 2pt}, inner sep =0pt} ]
    \node (z_1) [circle, fill, inner sep =1pt, label= 90:{$h_1(z)$}] at (90:1cm) {};
    \node (z_2) [circle, fill, inner sep =1pt, label=210:{$h_2(z)$}] at (210:1cm) {};
    \node (z_3) [circle, fill, inner sep =1pt, label=330:{$h_3(z)$}] at (330:1cm) {};
    \node (z)   [circle, fill, inner sep =1pt,  pin=30:{$h(z)$} ] at (0,0) {};
    \draw (z)--(z_2);
    \draw (z)--(z_3);
    \draw (z)--(z_1);
\end{tikzpicture}
\]
We call the edges within these two gadgets \emph{internal} edges.

Moreover, if $uv\in E(G)$, then $h_i(u)h_j(v)$ is also an edge in $H$ for some $i,j\in\{1,2,3\}$ in such a way that each vertex in $H$ is used in exactly one of the additional edges. We call these edges in $H$ between gadgets \emph{external} edges. 

We write $f(uv)=h_i(u)$ and $g(uv)=h_j(v)$ to identify the two gadget vertices in $H$ connected by the external edge representing $uv$. 
Confer figure~\ref{fig: G'}. 
The graph $H$ has the property that every vertex except the terminal centers $h(z)$ for $z\in Z$ is part of exactly one external edge.

Our first insight is the following:

\begin{lemma}
If $G$ is planar, then so is $H$.
\end{lemma}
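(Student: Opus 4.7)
The plan is to take a planar embedding of $G$ and transform it locally at each vertex into an embedding of $H$. Since $G$ is cubic, every vertex $v$ has exactly three incident edges, and the planar embedding endows them with a cyclic order around $v$. I would pick a small open disk $D_v$ around each vertex $v$, chosen small enough that the disks are pairwise disjoint and each incident edge of $v$ crosses the boundary $\partial D_v$ exactly once. This decomposes the drawing of $G$ into the disks $D_v$ plus three ``edge stubs'' per disk, each stub connecting a point on $\partial D_v$ to a point on $\partial D_u$ for the corresponding edge $uv$.

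Next, I would draw the gadget inside each disk. For a nonterminal $v$, place the three triangle vertices $h_1(v),h_2(v),h_3(v)$ on $\partial D_v$, one at each of the three points where an edge stub of $G$ meets the boundary; then draw the three triangle edges inside $D_v$. For a terminal $z$, place $h_1(z),h_2(z),h_3(z)$ on $\partial D_z$ at the same three boundary points, put the terminal center $h(z)$ in the interior of $D_z$, and draw the three star edges from $h(z)$ to its leaves. Both of these are clearly planar drawings inside a disk, with the three designated gadget vertices on the boundary in the correct cyclic order.

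Finally, for each edge $uv$ of $G$, the original edge stub from $\partial D_u$ to $\partial D_v$ already lands precisely on one gadget vertex $h_i(u)$ and one gadget vertex $h_j(v)$; take this curve as the drawing of the external edge $h_i(u)h_j(v)$. Because the cyclic order on $\partial D_v$ matches the cyclic order of edges at $v$ in $G$, the local assignment is consistent across the whole graph, and by construction each gadget vertex other than the terminal centers is the endpoint of exactly one external edge, as required in the definition of $H$. The resulting drawing has no crossings: inside each disk the gadget is planar, outside the disks the edge stubs were crossing-free in the original drawing of $G$, and the two pieces meet only at the boundary points.

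The main (minor) obstacle is checking that the bijection between the three edge stubs at $v$ and the three gadget vertices $h_1(v),h_2(v),h_3(v)$ can be chosen consistently with the required indexing in the definition of $H$; but since that definition only asks for \emph{some} assignment of indices $i,j\in\{1,2,3\}$ such that each vertex is used in exactly one external edge, we may simply let the planar embedding dictate the indexing, which gives the desired consistent choice of the maps $f$ and $g$.
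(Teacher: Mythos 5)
Your proof is correct and takes essentially the same approach as the paper: replace each vertex of a planar embedding of $G$ by its gadget drawn inside a small disk, using the cyclic order of the edge stubs to attach the external edges without crossings. The paper states this in three sentences; your version just fills in the details.
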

\begin{proof}
Both gadgets are easily seen to be planar. To see that $H$ is planar, use an embedding of $G$.
For each vertex $v$ in $G$, consider a small enough circle $C_v$ around $v$ containing no other edge or vertex. Now replace $v$ with a copy of its gadget small enough to fit $C_v$.
\end{proof}

Hence, if $G$ is planar, we can find a Pfaffian orientation of $H$, as well as for any subgraph of it, as any subgraph is also planar. (We note in passing that a Pfaffian orientation of a graph is not necessarily a Pfaffian orientation of its subgraph.)

From $H$, we will create several graphs depending on a subset of the terminal vertices. We write $H(X)$ for $X\subseteq Z$ to mean the graph obtained from $H$ by removing the terminal centers $h(z)$ and all incident edges for each terminal $z\notin X$. 
We have $H=H(Z)$.

\medskip
We introduce an indeterminate $s$ to control the length of the paths. We write $D(X,s)$ for a skew-symmetric adjacency matrix of a Pfaffian orientation of $H(X)$, where we have multiplied all entries representing an external edge $e$ in $H(X)$ with $s^{\ell(e)}$. 
 
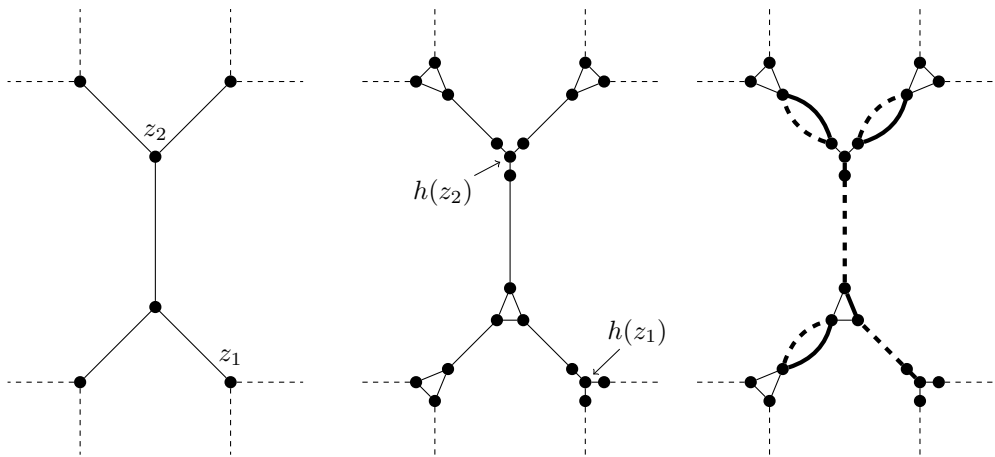
\begin{figure}[!ht]
\[\begin{tikzpicture}[scale=.5, rotate=90,shorten >=1pt, auto, node distance=1cm, ultra thick]
    \tikzstyle{node_style} = [circle,draw=black,fill=black, inner sep=0pt, minimum size=3pt]
    \tikzstyle{nhidden_style} = [inner sep=0pt, minimum size=0pt]
    \tikzstyle{edge_style} = [draw=black, line width=2, thin]
    \tikzstyle{ehidden_style} = [draw=black, line width=2, dotted, thin]
    
    \node[node_style,label=above:$z_1$] (v1) at (-4,-2) {};
    \node[node_style] (v2) at (-4,2) {};
    \node[node_style] (v3) at (-2,0) {};
    \node[node_style,label=above:$z_2$] (v4) at (2,0) {};
    \node[node_style] (v5) at (4,2) {};
    \node[node_style] (v6) at (4,-2) {};
    \node[nhidden_style] (v7) at (-6,-2) {};
    \node[nhidden_style] (v8) at (-6,2) {};
    \node[nhidden_style] (v9) at (-4,-4) {};
    \node[nhidden_style] (v10) at (-4,4) {};
    \node[nhidden_style] (v11) at (6,-2) {};
    \node[nhidden_style] (v12) at (6,2) {};
    \node[nhidden_style] (v13) at (4,-4) {};
    \node[nhidden_style] (v14) at (4,4) {};
 
    \begin{pgfonlayer}{bg} 
    \draw[edge_style]  (v1) edge (v3.center);
    \draw[edge_style]  (v2) edge (v3.center);
    \draw[edge_style]  (v3) edge (v4.center);
    \draw[edge_style]  (v4) edge (v5.center);
    \draw[edge_style]  (v4) edge (v6.center);
    
    \draw[ehidden_style]  (v1) edge (v7.center);
    \draw[ehidden_style]  (v1) edge (v9.center);
    \draw[ehidden_style]  (v2) edge (v8.center); 
    \draw[ehidden_style]  (v2) edge (v10.center);

    \draw[ehidden_style]  (v6) edge (v11.center);
    \draw[ehidden_style]  (v6) edge (v13.center);
    \draw[ehidden_style]  (v5) edge (v12.center); 
    \draw[ehidden_style]  (v5) edge (v14.center);
    \end{pgfonlayer}
    \end{tikzpicture}
    \qquad
    \begin{tikzpicture}[scale=.5,rotate=90, shorten >=1pt, auto, node distance=1cm, ultra thick,  every pin/.style={pin edge = {<-,black, shorten <= 2pt}, inner sep =0pt}]
    \tikzstyle{node_style} = [circle,draw=black,fill=black, inner sep=0pt, minimum size=3pt]
    \tikzstyle{nhidden_style} = [inner sep=0pt, minimum size=0pt]
    \tikzstyle{edge_style} = [draw=black, line width=2, thin]
    \tikzstyle{ehidden_style} = [draw=black, line width=2, dotted, thin]

    \node[node_style] (v1) [pin = 60:{$h(z_1)$}]at (-4,-2) {};
    \node[node_style] (v1e1) at (-4+.35,-2+.35) {};
    \node[node_style] (v1e2) at (-4-.5,-2) {};
    \node[node_style] (v1e3) at (-4,-2-.5) {};
    
    \node[node_style] (v2e1) at (-4+.35,2-.35) {};
    \node[node_style] (v2e2) at (-4-.5,2) {}; 
    \node[node_style] (v2e3) at (-4,2+.5) {};
    
    \node[node_style] (v3e1) at (-2-.35,0-.35) {};
    \node[node_style] (v3e2) at (-2-.35,0+.35) {};
    \node[node_style] (v3e3) at (-2+.5,0) {};

    \node[node_style] (v4) [pin = 210:{$h(z_2)$}] at (2,0) {};
    \node[node_style] (v4e1) at (2+.35,0-.35) {};
    \node[node_style] (v4e2) at (2+.35,0+.35) {};
    \node[node_style] (v4e3) at (2-.5,0) {};
  
    \node[node_style] (v5e1) at (4-.35,2-.35) {};
    \node[node_style] (v5e2) at (4+.5,2) {}; 
    \node[node_style] (v5e3) at (4,2+.5) {};

    \node[node_style] (v6e1) at (4-.35,-2+.35) {};
    \node[node_style] (v6e2) at (4+.5,-2) {}; 
    \node[node_style] (v6e3) at (4,-2-.5) {};
    
    \node[nhidden_style] (v7) at (-6,-2) {};
    \node[nhidden_style] (v8) at (-6,2) {};
    \node[nhidden_style] (v9) at (-4,-4) {};
    \node[nhidden_style] (v10) at (-4,4) {};
    \node[nhidden_style] (v11) at (6,-2) {};
    \node[nhidden_style] (v12) at (6,2) {};
    \node[nhidden_style] (v13) at (4,-4) {};
    \node[nhidden_style] (v14) at (4,4) {};
  
    \begin{pgfonlayer}{bg} 
    \draw[edge_style]  (v1) edge (v1e1.center);
    \draw[edge_style]  (v1) edge (v1e2.center);
    \draw[edge_style]  (v1) edge (v1e3.center);

    \draw[edge_style]  (v2e1) edge (v2e2.center);
    \draw[edge_style]  (v2e2) edge (v2e3.center);
    \draw[edge_style]  (v2e3) edge (v2e1.center);

    \draw[edge_style]  (v3e1) edge (v3e2.center);
    \draw[edge_style]  (v3e2) edge (v3e3.center);
    \draw[edge_style]  (v3e3) edge (v3e1.center);

    \draw[edge_style]  (v4) edge (v4e1.center);
    \draw[edge_style]  (v4) edge (v4e2.center);
    \draw[edge_style]  (v4) edge (v4e3.center);

    \draw[edge_style]  (v5e1) edge (v5e2.center);
    \draw[edge_style]  (v5e2) edge (v5e3.center);
    \draw[edge_style]  (v5e3) edge (v5e1.center);
  
    \draw[edge_style]  (v6e1) edge (v6e2.center);
    \draw[edge_style]  (v6e2) edge (v6e3.center);
    \draw[edge_style]  (v6e3) edge (v6e1.center);
  
    \draw[edge_style]    (v1e1) edge (v3e1.center);
    \draw[edge_style]  (v2e1) edge  (v3e2.center);
   
    \draw[edge_style]  (v3e3) edge (v4e3.center);
    \draw[edge_style]  (v4e2) edge  (v5e1.center);

    \draw[edge_style]  (v4e1) edge  (v6e1.center);
    
    \draw[ehidden_style]  (v1e2) edge (v7.center);
    \draw[ehidden_style]  (v1e3) edge (v9.center);
    \draw[ehidden_style]  (v2e2) edge (v8.center); 
    \draw[ehidden_style]  (v2e3) edge (v10.center);

    \draw[ehidden_style]  (v6e2) edge (v11.center);
    \draw[ehidden_style]  (v6e3) edge (v13.center);
    \draw[ehidden_style]  (v5e2) edge (v12.center); 
    \draw[ehidden_style]  (v5e3) edge (v14.center);
    \end{pgfonlayer}
    \end{tikzpicture}
    \quad
    \begin{tikzpicture}[scale=.5,rotate=90, shorten >=1pt, auto, node distance=1cm, ultra thick]
    \tikzstyle{node_style} = [circle,draw=black,fill=black, inner sep=0pt, minimum size=3pt]
    \tikzstyle{nhidden_style} = [inner sep=0pt, minimum size=0pt]
    \tikzstyle{edge_style} = [draw=black, line width=2, thin]
    \tikzstyle{ehidden_style} = [draw=black, line width=2, dotted, thin]
    \tikzstyle{es1_style} = [draw=black, line width=2, ultra thick]
    \tikzstyle{es2_style} = [draw=gray, dashed, line width=2, ultra thick]

    \node[node_style] (v1) at (-4,-2) {};
    \node[node_style] (v1e1) at (-4+.35,-2+.35) {};
    \node[node_style] (v1e2) at (-4-.5,-2) {};
    \node[node_style] (v1e3) at (-4,-2-.5) {};
    
    \node[node_style] (v2e1) at (-4+.35,2-.35) {};
    \node[node_style] (v2e2) at (-4-.5,2) {}; 
    \node[node_style] (v2e3) at (-4,2+.5) {};
    
    \node[node_style] (v3e1) at (-2-.35,0-.35) {};
    \node[node_style] (v3e2) at (-2-.35,0+.35) {};
    \node[node_style] (v3e3) at (-2+.5,0) {};

    \node[node_style] (v4) at (2,0) {};
    \node[node_style] (v4e1) at (2+.35,0-.35) {};
    \node[node_style] (v4e2) at (2+.35,0+.35) {};
    \node[node_style] (v4e3) at (2-.5,0) {};
  
    \node[node_style] (v5e1) at (4-.35,2-.35) {};
    \node[node_style] (v5e2) at (4+.5,2) {}; 
    \node[node_style] (v5e3) at (4,2+.5) {};

    \node[node_style] (v6e1) at (4-.35,-2+.35) {};
    \node[node_style] (v6e2) at (4+.5,-2) {}; 
    \node[node_style] (v6e3) at (4,-2-.5) {};
    
    \node[nhidden_style] (v7) at (-6,-2) {};
    \node[nhidden_style] (v8) at (-6,2) {};
    \node[nhidden_style] (v9) at (-4,-4) {};
    \node[nhidden_style] (v10) at (-4,4) {};
    \node[nhidden_style] (v11) at (6,-2) {};
    \node[nhidden_style] (v12) at (6,2) {};
    \node[nhidden_style] (v13) at (4,-4) {};
    \node[nhidden_style] (v14) at (4,4) {};
  
    \begin{pgfonlayer}{bg} 
    \draw[es1_style]  (v1) edge (v1e1.center);
    \draw[edge_style]  (v1) edge (v1e2.center);
    \draw[edge_style]  (v1) edge (v1e3.center);

    \draw[edge_style]  (v2e1) edge (v2e2.center);
    \draw[edge_style]  (v2e2) edge (v2e3.center);
    \draw[edge_style]  (v2e3) edge (v2e1.center);

    \draw[edge_style]  (v3e1) edge (v3e2.center);
    \draw[edge_style]  (v3e2) edge (v3e3.center);
    \draw[es1_style]  (v3e3) edge (v3e1.center);

    \draw[edge_style]  (v4) edge (v4e1.center);
    \draw[edge_style]  (v4) edge (v4e2.center);
    \draw[es1_style]  (v4) edge (v4e3.center);

    \draw[edge_style]  (v5e1) edge (v5e2.center);
    \draw[edge_style]  (v5e2) edge (v5e3.center);
    \draw[edge_style]  (v5e3) edge (v5e1.center);
  
    \draw[edge_style]  (v6e1) edge (v6e2.center);
    \draw[edge_style]  (v6e2) edge (v6e3.center);
    \draw[edge_style]  (v6e3) edge (v6e1.center);
  
    \draw[es2_style]    (v1e1) edge (v3e1.center);
    \draw[es1_style]  (v2e1) edge [bend right] (v3e2.center);
    \draw[es2_style]  (v2e1) edge [bend left] (v3e2.center);
   
    \draw[es2_style]  (v3e3) edge (v4e3.center);
    \draw[es1_style]  (v4e2) edge [bend right] (v5e1.center);
    \draw[es2_style]  (v4e2) edge [bend left] (v5e1.center);

    \draw[es1_style]  (v4e1) edge [bend right] (v6e1.center);
    \draw[es2_style]  (v4e1) edge [bend left] (v6e1.center);
    
    \draw[ehidden_style]  (v1e2) edge (v7.center);
    \draw[ehidden_style]  (v1e3) edge (v9.center);
    \draw[ehidden_style]  (v2e2) edge (v8.center); 
    \draw[ehidden_style]  (v2e3) edge (v10.center);

    \draw[ehidden_style]  (v6e2) edge (v11.center);
    \draw[ehidden_style]  (v6e3) edge (v13.center);
    \draw[ehidden_style]  (v5e2) edge (v12.center); 
    \draw[ehidden_style]  (v5e3) edge (v14.center);
    \end{pgfonlayer}
    \end{tikzpicture}
  \]
  \caption{Left: The instance graph $G$ with two terminal nodes $z_1$ and $z_2$.
    Middle: The gadget graph $H$.
    Right: The union of two matchings in $H(X)$ and $H(Z\setminus X)$ for some $X$ containing both $z_1$ and $z_2$. Together, they form a path between $h(z_1)$ and $h(z_2)$, along with three double edges.}
    \label{fig: G'}
\end{figure}

Our algorithm is a direct application of the following result:

\begin{lemma}
\label{lem: p}
For a graph $G$, consider
\begin{equation}\label{eq: p(G,s)}
p(G,s)=\sum_{X\subseteq Z} (-1)^{|X\cap A|}\left|\sqrt{\det(D(X,s))\det(D(Z\setminus X,s))}\right|
\end{equation}
as a polynomial in the indeterminate $s$.
Let $cs^d$ be the largest degree monomial with a positive coefficient in $p(G,s)$.
Then, $\ell_{A,B}=2\Lambda-d$ is the shortest total length of any disjoint $A,B$-paths in $G$, and $c$ is $2^{|Z|/2}$ times the number of solutions having that minimum length.
\end{lemma}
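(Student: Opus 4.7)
My plan is to expand $p(G,s)$ as a signed sum over pairs of perfect matchings and identify which structures in $M_1\cup M_2$ survive the cancellation induced by $(-1)^{|X\cap A|}$. By the Pfaffian identity~\eqref{eq: pfaff}, which applies because $H(X)$ and $H(Z\setminus X)$ are planar subgraphs of the planar $H$ and all weights are nonnegative, each factor $|\sqrt{\det D(X,s)}|$ equals $\operatorname{pm}(H(X),s)$. Swapping the sums,
\[
p(G,s)=\sum_{(M_1,M_2)}(-1)^{|X(M_1)\cap A|}\,s^{\sum_{e\text{ external}}m(e)\ell(e)},
\]
where $X(M_1)=\{z\in Z:h(z)\text{ is matched in }M_1\}$ determines the unique $X$ for which $M_1\in\mathscr{M}(H(X))$ and $M_2\in\mathscr{M}(H(Z\setminus X))$, and $m(e)\in\{0,1,2\}$ is the multiplicity of the external edge $e$ in $M_1\cup M_2$.

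A short case split on how many triangle edges $M_1,M_2$ use at a non-terminal gadget shows its external multiplicities must form one of $\{2,2,2\},\{1,1,2\},\{1,1,0\},\{0,0,2\}$, while a terminal gadget is forced to the pattern $\{1,2,2\}$. Hence $M_1\cup M_2$ decomposes into doubled edges, alternating cycles, and alternating paths joining pairs of terminal centers; the cycles and paths project to simple cycles and simple terminal-to-terminal paths in $G$. Let $\sigma$ denote such a structure and $\pi$ the matching of $Z$ induced by its paths.

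For the sign sum I would observe that by alternation each $H$-path of odd length forces its two endpoint terminals to lie on the same side of the $X$ / $Z\setminus X$ split, while each $H$-path of even length forces them to opposite sides. In either case the compatible $X$'s are in bijection with a choice for each pair of $\pi$, and $\sum_X(-1)^{|X\cap A|}$ factorises over $\pi$: each monochromatic pair ($A$-$A$ or $B$-$B$) contributes $\pm 2$ whereas each mixed pair contributes $0$. Hence only structures whose $\pi$ is a valid disjoint $A,B$-path matching survive, and each cycle of $\sigma$ additionally contributes a factor $2$ from its two alternating $M_1/M_2$-assignments.

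Finally $\deg(\sigma)=2\Lambda-\mathrm{cost}(\sigma)$ where $\mathrm{cost}(\sigma)=\ell(\text{paths})+\ell(\text{cycles})+2\ell(\text{unused externals})$, and on surviving $\sigma$ this cost is at least $\ell_{A,B}$ since the paths form a valid disjoint $A,B$-path system. Equality is attained exactly when the paths form an optimal solution $E'$, there are no cycles or unused externals, and every path vertex uses the configuration $\{1,1,2\}$ rather than $\{1,1,0\}$ (which costs $2\ell(\text{middle external})$ more). Under this configuration every $H$-path has odd length $2k+3$ (two star edges, $k$ triangle edges, $k+1$ externals), so each monochromatic pair contributes $+2$ to the sign sum and the coefficient at that degree is positive. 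The main technical point is to verify that such an extremal $\sigma$ is realisable for every optimal $E'$: placing every non-path non-terminal in $\{2,2,2\}$ and every path vertex in $\{1,1,2\}$ is globally consistent because every edge outside $E'$ appears as the ``third'' external at both of its endpoints and is therefore doubled on both sides. This pairs optimal $E'$ bijectively with extremal $\sigma$, giving $c=2^{|Z|/2}\cdot S_{A,B}$ and $d=2\Lambda-\ell_{A,B}$, as claimed.
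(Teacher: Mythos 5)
Your proof is correct; the overall skeleton (expanding into pairs of perfect matchings of the gadget graphs, the degree analysis, and the realizability of the extremal structure from an optimal solution) matches the paper's, but you organise the crucial cancellation step genuinely differently. The paper argues term by term on triples $(X,M_1,M_2)$: it picks the lexicographically first path joining $h(a)$ to $h(b)$ with $a\in A$, $b\in B$, swaps its edges between $M_1$ and $M_2$, and passes to an $X'$ with $|X'\cap A|$ of opposite parity, obtaining a sign-reversing involution that annihilates exactly the mixed terms. You instead fix the multiset $M_1\cup M_2$ and factorise $\sum_X(-1)^{|X\cap A|}$ over the induced pairing of $Z$, getting $0$ for mixed pairs and $\pm 2$ for monochromatic ones (plus a factor $2$ per alternating cycle). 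Your route is slightly stronger: it yields the exact signed weight of every surviving structure, so the constant $2^{|Z|/2}$ and the positivity at the top degree fall out of the odd length of the extremal $H$-paths, whereas the paper recovers that constant by separately counting the subsets $X$ compatible with an optimal solution. Your explicit classification of the gadget multiplicity patterns ($\{2,2,2\}$, $\{1,1,2\}$, $\{1,1,0\}$, $\{0,0,2\}$ for triangles, $\{1,2,2\}$ for stars) is also more detailed than the paper's degree-at-most-two observation and makes the strict extra cost of the $\{1,1,0\}$ and $\{0,0,2\}$ configurations transparent. The one point you (like the paper) leave implicit is that a path or cycle traverses each gadget at most once, since a gadget has only three external connectors; this is what guarantees that the paths project to vertex-disjoint simple paths in $G$ and hence that $\ell(\text{paths})\ge\ell_{A,B}$.
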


\begin{proof}
  We begin by arguing that $p(G,s)$ indeed is a polynomial in the indeterminate $s$.
 Fix $X\subseteq Z$.
 Write $\mathscr M(X)$ for the set $\mathscr M(H(X))$ of perfect matchings in $H(X)$.
 By \eqref{eq: pfaff}, we can write
 \[\left|\sqrt{\det(D(X,s))}\right|
   =\operatorname{pm}(H(X))  = \sum_{M\in \mathscr M(X)} \prod_{e\in M} w(e)\,,
 \]
 where $w(e)$ is either $s^{\ell(e)}$ or $1$.
 Thus, for a pair of perfect matchings $M_1\in \mathscr M(X)$ and $M_2\in \mathscr M(Z\setminus X)$, we can write their contributing  term $t(M_1,M_2)$ as
   \[ t(M_1,M_2) = 
     \left(\prod_{e\in M_1} w(e)\right)\cdot \prod_{e\in M_2} w(e)\,,
      \]
      which is clearly a polynomial in $s$, 
      and write
      \[ p(G,s) = \sum_{X\subseteq Z} \sum_{M_1\in \mathscr M(X)}\sum_{M_2\in \mathscr M(Z\setminus X)} t(M_1,M_2)\,.
      \]
\medskip

Now view $M_1\cup M_2$ as a subgraph in $H$, by identifying each vertex in $H(X)$ and $H(Z\setminus X)$ with its copy in $H$. 
(It is helpful to view $M_1\cup M_2$ as a multiset, so the corresponding subgraph is in fact a multigraph using the edges $M_1\cap M_2$ twice.)
We can visualise this as placing the two graphs on top of each other and looking at the subgraph formed by the two matchings. 
It is clear that every vertex in $H$ has degree at most 2 in this subgraph, so $M_1\cup M_2$ can be partitioned into three edge subsets $\mathscr P, \mathscr C, \mathscr D\subseteq E(H)$, such that $\mathscr P$ is a disjoint union of simple paths, $\mathscr C$ is a disjoint union of simple cycles, and $\mathscr D$, which is equal to the intersection $M_1\cap M_2$, is a disjoint union of isolated edges.

We claim that every path in $\mathscr P$ has its endpoints in terminal centers.
To see this, first note that each terminal centre $h(z)$ for $z\in Z$ is present in exactly one of the graphs $H(X)$ and $H(Z\setminus X)$.
Therefore, $h(z)$  is matched by exactly one edge in $M_1\cup M_2$ and therefore is the endpoint of a path.
Every other vertex in $H$ appears in both $H(X)$ and $H(Z\setminus X)$ and is therefore matched in both $M_1$ and $M_2$; in particular, no such vertex is the endpoint of a simple path.
Figure~\ref{fig: G'} shows a small example.

\medskip
We next argue that unions $M_1\cup M_2$ whose paths connect terminal centers $h(a)$ and $h(b)$ with  $a\in A$ and $b\in B$ contribute nothing to $p(G,s)$. 
To this end, consider such a term $t(M_1,M_2)$ with $M_1\in \mathscr M(X)$ and $M_2\in\mathscr M(Z\setminus X)$  and let 
\[P=(u_1,\ldots,u_k) \quad\text{with }u_1=h(a), u_k=h(b)\]
be the lexicographically first such path in $\mathscr P$.

If $k$ is odd, then the edges $u_1u_2$, $u_3u_4$, $\ldots$, $u_{k-2}u_{k-1}$ belong to one matching, say $M_1$, and the edges $u_2u_3,$ $\ldots$, $u_{k-1}u_k$ belong to $M_2$.
In particular, the terminal center $h(a)$ is matched in $M_1$, which implies $h(a)\in V(H(X))$ and therefore $a\in X$.
Conversely, $h(b)$ is matched in $M_2$, which implies $h(b)\in V(H(Z\setminus X))$ and $b\notin X$.
Now form $X'=(X\cup\{b\})\setminus \{a\}$ and consider the two matchings $M_1'\in\mathscr M(X')$ and $M_2'\in\mathscr M(Z\setminus X')$ created from $M_1$ and $M_2$ by swapping the edges on $P$.
Note that the edge $u_{k-1}u_k$ incident on $h(b)$ now belongs to $M_1'$, and since $b$ belongs to $X'$, the matching $M_1'$ is indeed a perfect matching in $\mathscr M(X')$.
Similarly, $M_2'\in \mathscr M(Z\setminus X')$.
Starting the exact same process from the matchings $M_1'$ and $M_2'$ and set $X'$ would get us back to $M_1$,  $M_2$, and $X$, since the same path $P$ will be chosen by the lexicographical order, so the process defines a fixed-point free involution on the set of terms $t(M_1,M_2)$ and subsets of $Z$.

Crucially, the contribution to \eqref{eq: p(G,s)} of terms paired by this involution cancel:\[ (-1)^{|X\cap A|} t(M_1,M_2) + (-1)^{|X'\cap A|} t(M'_1,M'_2)=0\,, \]
because the multisets $M_1\cup M_2$ and $M_1'\cup M_2'$ are the same, and $X'$ and $X$ differ in exactly one terminal from $A$.
Hence no such terms will survive in the computation of $p(G,s)$.

If $k$ is even, then $u_1u_2$, $u_3u_4$, $\ldots$, $u_{k-1}u_k$ belong to the same matching, say $M_1$.
Thus, both $h(a)$ and $h(b)$ belong to $H(X)$, so $a$ and $b$ belong to $X$.
Set $X'=X\setminus \{a,b\}$, and follow the same argument as above. 

In other words, $t(M_1,M_2)$ survives in $p(G,s)$ only if the disjoint paths in $\mathscr P$ have their endpoints either both in $A$ or both in $B$. The contribution is
\[t(M_1,M_2) = \left(\prod_{e\in\mathscr D } w(e)^2 \right)\cdot\prod_{e\in \mathscr C\cup \mathscr P} w(e) = s^d\,,\]
where
\[ 
d= \left(2\sum_{e\in \mathscr D} \ell(e)\right) + \sum_{e\in \mathscr C\cup \mathscr P} \ell(e) = 2\Lambda - \sum_{e\in \mathscr C\cup \mathscr P}\ell(e)\,.
\]
The last term is at least $\ell_{A,B}$, and attains that value exactly if $\mathscr C$ is empty and $\mathscr P$ contains the external edges of a solution $E'$ to Shortest disjoint $A,B$-paths in $G$.
Otherwise, $d < 2\Lambda-\ell_{A,B}$.

\medskip
We finally turn to the other direction, to show that if there exists disjoint $A,B$-paths in $G$, we will detect them in $p(G,s)$. Moreover, we will argue that we can count the ones of shortest total length.
To see this, first consider a solution $E'\subseteq E(G)$ to Shortest disjoint $A,B$-paths, i.e., a disjoint union of paths \[E'=P_1\cup\cdots\cup P_{|Z|/2}\,,\]
each of which has terminal endpoints either both in $A$, or both in $B$.
Let $T$ be a subgraph of $H$ obtained in the following way. 
For each such path $P=(v_1,\ldots,v_k)$, first add the external edges $f(v_1v_2)g(v_1v_2)$, $\ldots$, $f(v_{k-1}v_k)g(v_{k-1}v_k)$ to $T$. Second, add the internal edges $h(v_1)f(v_1v_2)$ and $g(v_{k-1}v_k) h(v_k)$ in the two terminal gadgets, and the internal edges $g(v_iv_{i+1}) f(v_{i+1}v_{i+2})$ in the nonterminal gadgets for $i\in\{1,\ldots, k-2\}$.
This adds precisely one internal edge per gadget representing a vertex on $P$.
Third, for every vertex $u\in V(H)$ not used in an edge so far, we add to $T$ its unique external edge in $H$.
This is where we use the property of $H$ that every non-terminal vertex has a unique external edge.
Thus, $T$ consists of disjoint edge sets $\mathscr P,\mathscr D\subseteq E(H)$ where $\mathscr P$ consists of disjoint paths and $\mathscr D$ consists of disjoint (external) edges.

We continue to account for the contribution of $T$ to \eqref{eq: p(G,s)}.
Let $X\subseteq Z$ be a subset of terminals such that the endpoints of the paths in $\mathscr P$ are either both in $X$ or both in $Z\setminus X$. In particular, $|X\cap A|$ is even, and there are $2^{|Z|/2}$ such subsets.
There is exactly one perfect matching $M_1$ in $H(X)$ that is a subgraph of $T$; this matching contains all the internal edges on the paths of $\mathscr P$ with endpoints both in $X$.
There is also exactly one perfect matching $M_2$ in $H(Z\setminus X)$ that is a subgraph of $T$;
this matching contains all the external edges on the paths of $\mathscr P$ with endpoints both in $Z\setminus X$.
In particular, every external edge in $\mathscr D$ appears exactly twice in the multiset $M_1\cup M_2$, and every external edges in $\mathscr P$ appears exactly  once.
(The internal edges have weight $1$, so we need not count their contribution to a product.)
Thus, the total contribution of $M_1$ and $M_2$ is
\[ t(M_1,M_2) = \left(\prod_{e\in \mathscr D} w(e)^2\right) \cdot  \prod_{e\in \mathscr P} w(e) = s^d\,,\quad
  \text{where } d=  2\Lambda - \ell_{A,B}\,,\]
and the solution $E'$ accounts for the contribution
\[ \sum_{X\subseteq Z}(-1)^{|X\cap A|} t(M_1,M_2) = 2^{|Z|/2} s^d\,.\]
Together with the observation above that all other surviving terms have lower degree in $p(G,s)$, this shows the lemma.

\end{proof}
\subsection{Algorithm}
\label{sec: alg}
Our algorithm simply computes the coefficients of $p(G,s)$ in the definition in Lemma~\ref{lem: p} seen as a polynomial in $s$ by polynomial interpolation. The algorithm works through direct evaluation in sufficiently many points $s\in \{0,1,\ldots,2\Lambda\}$ of $p(G,s)$ after replacing $s$ for its numerical value. Hence all computations are over the integers.

\begin{enumerate}
\item For $s=0$ to $2\Lambda$,
\item \hspace{5mm} Set $sum_s=0$.
\item \hspace{5mm} For $X\subseteq Z$, $|X|$ even,
\item \hspace{10mm} Construct $H(X)$ and $H(Z\setminus X)$ and their Pfaffian orientations.
\item \hspace{10mm} Compute the integers $\det^2(D(X))$ and $\det^2(D(Z\setminus X))$ for the current value of $s$. 
\item \hspace{10mm} Take the fourth root of the two determinants and multiply them.
\item \hspace{10mm} Add the product with the sign $(-1)^{|X\cap A|}$ to $sum_s$.
\item Use polynomial interpolation to compute the coefficients of $p(G,s)$ from the array $sum$.
\item Locate the largest non-zero monomial $cs^d$.
\item Return $\ell_{A,B}=2\Lambda-d$ and $S_{A,B}=c/2^{|Z|/2}$.
\end{enumerate}

\subsection{Sequential Runtime Analysis}
We prove that the algorithm in the previous section~\ref{sec: alg} can be implemented to run sequentially in the time claimed by Thm.~\ref{thm: alg}.
Recall that the polynomial $p(G,s)$ has degree at most $2\Lambda$, and hence the number of evaluated points is sufficient to uniquely recover the coefficients of the polynomial. We can upper bound the value of the two determinants by looking at Leibniz formula for the determinant. There are at most $3^{3n+|A\cup B|}$ terms since there are at most $3$ choices per vertex in $H(X)$, For each choice the largest value is obtained if the external edges are picked twice, i.e. every term is at most $L^{2\Lambda}$. Hence the determinant can be a $\beta=\tilde{O}(nL)$ bit number.
We can compute the determinants in row 5 using $O(n^{\omega/2})$ arithmetic operations, using Yuster's algorithm~\cite{Y08} for the square of the determinant, which in turn uses the dissection method developed by Lipton et al.~\cite{LRT79}. Note that since we know all our determinants to be positive as they are squares of the number of perfect matchings, no information is lost by computing even powers of the determinant. Every arithmetic operation can be computed in $\tilde{O}(\beta)$ time~\cite{GG13}. Computing all determinants requires at most $\tilde{O}(\Lambda 2^{|A\cup B|}n^{\omega/2} \beta)=\tilde{O}(2^{|A\cup B|}n^{\omega/2+2}L^2)$ time. This part dominates the computation time, since taking the square roots in row 7 using Newton's method requires only about $\log{nL}$ iterations for an integer square, and the polynomial interpolation in row 8 can be done in quadratic time. It requires $\tilde{\Omega}(\Lambda)$ operations over a finite field, cf. \cite{GG13}, and we need a field, or several fields and the Chinese remainder theorem, of total size $\Omega(\beta)$ to recover the integer values.
This completes the proof of Thm.~\ref{thm: alg}.

\subsection{Parallel Circuit Analysis}
In this section we prove that our algorithm in section~\ref{sec: alg} can be efficiently implemented as a circuit of polynomial size and polylogarithmic depth. First we note that all values of $s$ and all values of $X$ in row 1 and 3 of the algorithm can be evaluated in parallel. All computations are made on integers of $\beta=\tilde{O}(nL)$ bits as claimed in the previous section. Addition and multiplication on $\beta$ bit integers can be done in $\operatorname{polylog}(\beta)$ depth.
Constructing the graphs $H(X)$  in row 4 can be done even without a planar embedding of $G$, as it doesn't matter how the external edges are mapped to the gadget's connectors, planarity is always preserved.
Vazirani shows that the number of perfect matchings can be computed by an NC algorithm~\cite{V89}, see also the textbook~\cite{KR98}. He describes how a Pfaffian orientation for a planar graph can be obtained via Klein and Reif's parallel planar embedding algorithm~\cite{KR86}. He next uses the fact that the determinant can be computed in NC, a consequence of Csansky's algorithm for the determinant~\cite{C76}. Berkowitz algorithm~\cite{B84} via iterated matrix product can also be used (see Cook~\cite{C85}). Computing the integer square root at row 6 is a logarithmic depth task with Newton's method since the convergence is quadratic.
Once all evaluations are done, the inner loop summation at row 7 can be computed for all $s$, again in polylogarithmic depth by a balanced binary tree of adders of $\beta$-sized integers.
Finally, Cook describes how polynomial interpolation is in NC~\cite{C85} by reducing to Berkowitz algorithm for the determinant~\cite{B84}.
This completes the proof of Thm.~\ref{thm: nc}.

\subsection{Maximum Degree 3}
\label{sec: max degree 3}

We presented our algorithm for cubic planar graphs, which is the algorithmically interesting case.
Let us observe that a simple reduction extends the algorithm to planar graphs of \emph{maximum} degree $3$, because we allow integer-weighted edges.

First consider an edge $ua$ where $u$ is a nonterminal vertex of  degree $3$ and $a$ is a terminal of degree $1$.
Then $ua$ can be removed and $u$ inserted into the terminal set of $a$; the resulting instance has a shortest solution of size $\ell_{A,B}-\ell(ua)$.
When $u$ is also a terminal vertex, there are two cases:
If $u$ belongs to the same terminal set as $a$ then $ua$ must be a path in the shortest solution, so we can remove both $u$ and $a$ and discount the resulting value by $\ell(ua)$.
If $u$ belongs to the other terminal set than $a$ then there is no solution and we can output $S_{A,B}= 0$.

Consider a $(u,v)$-path $P$ whose internal vertices all have degree $2$.
If none of $P$'s internal vertices are terminals then $P$ can be contracted into a single edge with the sum of the original edge lengths.
If $P$ contains alternating terminals, say $a\in A$, $b\in B$, $a'\in A$ in that order, no solution can exist.
If $P$ contains exactly two terminals $a\in A$ and $b\in B$ then its prefix from $u$ to $a$ can be contracted into a single edge, and so can its suffix from $b$ to $v$; the infix from $a$ to $b$ can be removed.
The resulting dangling edges $ua$ and $vb$ are handled as above.

In general, we can replace a degree-$2$ terminal $a$ incident on the edges $ua$ and $av$ with the 4-vertex `diamond' graph, introducing $3$ new nonterminal vertices.
The original edges retain their lengths, and the new edges receive length $1$, so that
\[
  \begin{tikzpicture}[scale =.5, xscale =1.5]
    \node (u) [fill, inner sep = 1pt, circle, label = below:$u$] at (0,0) {};
    \node (a) [fill, inner sep = 1pt, circle, label = below:$a$] at (1,0) {};
    \node (v) [fill, inner sep = 1pt, circle, label = below:$v$] at (2,0) {};
    \draw (u) edge node [above] {$\ell_1$}  (a);
    \draw (a) edge node [above] {$\ell_2$} (v);
  \end{tikzpicture}
\]
 becomes
 \[
   \begin{tikzpicture}[scale =.5, xscale =1.5]
     \node (u) [fill, inner sep = 1pt, circle, label = below:$u$] at (0,0) {};
     \node (x1) [fill, inner sep = 1pt, circle, label = below:$u'$] at (1,0) {};
    \node (a) [fill, inner sep = 1pt, circle, label = above:$a$] at (2,1) {};
    \node (a1) [fill, inner sep = 1pt, circle, label = below:$w$] at (2,-1) {};
    \node (y1) [fill, inner sep = 1pt, circle, label = below:$v'$] at (3,0) {};
    \node (v) [fill, inner sep = 1pt, circle, label = below:$v$] at (4,0) {};
    \draw (u) edge node [above] {$\ell_1$} (x1);
    \draw (x1) edge node [above] {\small $1$} (a);
    \draw (x1) edge node [below] {\small $1$} (a1);
    \draw (a) edge node [above] {\small $1$} (y1);
    \draw (a) edge node [right] {\small $1$} (a1);
    \draw (a1) edge node [below] {\small $1$} (y1);
    \draw (v) edge node [above] {$\ell_2$} (y1);
  \end{tikzpicture}
\]
No path with endpoint $a$ in an optimal solution will use $w$, because $uu'a$ is shorter than $uu'wa$. 
No other solution can use the nonterminal $w$ either, because doing so would isolate $a$.
We conclude that an optimal solution uses either $au'u$ or $av'v$ and no other edges in the gadget.
Thus, every optimal solution in the transformed graph corresponds to exactly one optimal solution in the original, and $\ell_{A,B}$  increments by one for each of these modifications.

\section{Hardness Result}
In this section we prove Thm.~\ref{thm: nph}.
Our hardness reduction is from counting maximum independent sets in cubic planar graphs, proven \#P-hard in Vadhan~\cite{V01} (Corollary 4.2.1). The NP-hardness result for Disjoint $A,B$-paths in general graphs by Hirai and Namba~\cite{HN17}, follows~Hirai and Pap~\cite{HP14}. It is a reduction directly from 3-Satisfiability but it is not (weakly) parsimonious.
We give here such a strengthened reduction.

Consider a cubic planar graph $G=(V,E)$ in which we want to count the maximum independent sets. We will from $G$ construct a maximum degree $3$ planar instance $I$ to Shortest disjoint $A,B$-paths. As described in the previous section, we can by adding a few vertices per vertex of degree less than three make sure the graph is cubic while preserving planarity. Here we will stick with a few vertices of degree two in our description of $I$ for simplicity.
First, for every vertex $v\in V$, we add a clockwise ordered cycle $v'_1,\ldots,v'_8$. The edge $v'_8v'_1$ has length $12$ whereas all other edges $v'_iv'_{i+1}$ for $i\in\{1,\ldots, 7\}$ have length $2$ if $i$ is odd and length $1$ if $i$ is even. Furthermore, vertices $v'_1$ and $v'_8$ belong to $A$ for every vertex $v$. Second, for every edge $uv\in E$, we add two vertices $w'_1$ and $w'_2$ to $I$. We add edges $w'_1u'_i$,$w'_2u_{i+1}$,$w'_1v'_j$, and $w'_2v_{j-1}$ of length $1$ for some indices $i$ and $j$ so that no vertex is used more than once, and the resulting graph $I$ is planar. This is easy to accomplish by using a planar embedding of $G$ and order edges incident on a vertex in clockwise order. Confer figure~\ref{fig: nph}.

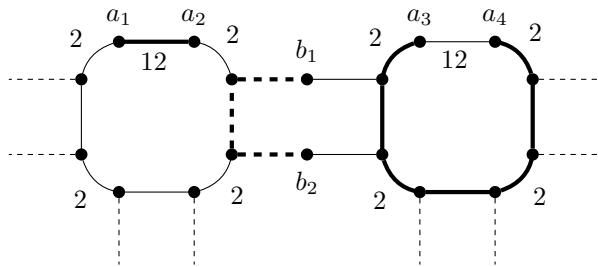
\begin{figure}[!ht]
  \centering
    \begin{tikzpicture}[scale=.5, shorten >=1pt, auto, node distance=1cm, ultra thick]
    \tikzstyle{node_style} = [circle,draw=black,fill=black, inner sep=0pt, minimum size=3pt]
    \tikzstyle{edge_style} = [draw=black, line width=2, thin]
    \tikzstyle{es1_style} = [draw=black, line width=2, ultra thick]
    \tikzstyle{es2_style} = [draw=gray, dashed, line width=2, ultra thick]
    \tikzstyle{ehidden_style} = [draw=black, line width=2, dotted, thin]
    \node[node_style] (v1) at (1,-2) {};
    \coordinate (v1e) at (1,-4) {};
    \node[node_style] (v2) at (2,-1) {};
    \node[node_style] (v3) at (2,1) {};
    \node[node_style,label=above:$a_2$] (v4) at (1,2) {};
    \node[node_style,label=above:$a_1$] (v5) at (-1,2) {};
    \node[node_style] (v6) at (-2,1) {};
    \coordinate (v6e) at (-4,1) {};
    \node[node_style] (v7) at (-2,-1) {};
    \coordinate (v7e) at (-4,-1) {};
    \node[node_style] (v8) at (-1,-2) {};
    \coordinate (v8e) at (-1,-4) {};
  
    \node[node_style,label=below:$b_2$] (v9)at (4,-1) {};
    \node[node_style,label=above:$b_1$] (v10) at (4,+1) {};
   
    \node[node_style] (v11) at (9,-2) {};
    \coordinate (v11e) at (9,-4) {};
    \node[node_style] (v12) at (10,-1) {};
    \coordinate (v12e) at (12,-1) {};
    \node[node_style] (v13) at (10,1) {};
    \coordinate (v13e) at (12,1) {};
    \node[node_style,label=above:$a_4$] (v14) at (9,2) {};
    \node[node_style,label=above:$a_3$] (v15) at (7,2) {};
    \node[node_style] (v16) at (6,1) {};
    \node[node_style] (v17) at (6,-1) {};
    \node[node_style] (v18) at (7,-2) {};
    \coordinate (v18e) at (7,-4) {};

    \begin{pgfonlayer}{bg} 
    \draw[edge_style]  (v1) edge  [bend right] node[swap] {$2$} (v2.center);
    \draw[es2_style]  (v2) edge (v3.center);
    \draw[edge_style]  (v3) edge  [bend right] node[swap] {$2$} (v4.center);
    \draw[es1_style]  (v4) edge node {$12$} (v5.center);
    \draw[edge_style]  (v5) edge  [bend right] node[swap] {$2$} (v6.center);
    \draw[edge_style]  (v6) edge (v7.center);
    \draw[edge_style]  (v7) edge [bend right] node[swap] {$2$}  (v8.center);
    \draw[edge_style]  (v8) edge (v1.center);
    \draw[ehidden_style]  (v6) edge (v6e.center); 
    \draw[ehidden_style]  (v7) edge (v7e.center); 
    \draw[ehidden_style]  (v8) edge (v8e.center); 
    \draw[ehidden_style]  (v1) edge (v1e.center); 
 
    \draw[es2_style]  (v2) edge (v9.center);
    \draw[es2_style]  (v3) edge (v10.center);
    \draw[edge_style]  (v17) edge (v9.center);
    \draw[edge_style]  (v16) edge (v10.center);

    \draw[es1_style]  (v11) edge [bend right] node[swap] {$2$}(v12.center);
    \draw[es1_style]  (v12) edge (v13.center);
    \draw[es1_style]  (v13) edge [bend right] node[swap] {$2$}(v14.center);
    \draw[edge_style]  (v14) edge node {$12$} (v15.center);
    \draw[es1_style]  (v15) edge [bend right] node[swap] {$2$}(v16.center);
    \draw[es1_style]  (v16) edge (v17.center);
    \draw[es1_style]  (v17) edge [bend right] node[swap] {$2$} (v18.center);
    \draw[es1_style]  (v18) edge (v11.center);
    \draw[ehidden_style]  (v12) edge (v12e.center); 
    \draw[ehidden_style]  (v13) edge (v13e.center); 
    \draw[ehidden_style]  (v18) edge (v18e.center); 
    \draw[ehidden_style]  (v11) edge (v11e.center); 
    \end{pgfonlayer} 
    \end{tikzpicture}
  \caption{Two vertex gadgets and one edge gadget in the constructed instance $I$ in the \#P-hardness proof. The edge terminals in $B$ must connect through some vertex gadget, forcing the $A$ terminals on it to connect through the longer length $12$ edge alternative.}
  \label{fig: nph}
\end{figure}

Furthermore, we add $w'_1$ and $w'_2$ to $B$. We now argue
\begin{lemma}
\label{lem: nph}
Let $\ell_{A,B}$ and $S_{A,B}$ be the solution to Shortest disjoint $A,B$-paths on $I$, then the maximum independent set in $G=(V,E)$ has size $\alpha(G)=12|V|+3|E|-\ell_{A,B}$ and the number of such sets are $S_{A,B}/2^{|E|-3\alpha(G)}$.
\end{lemma}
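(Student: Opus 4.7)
My plan is to set up a bijection between minimum-length solutions to Shortest disjoint $A,B$-paths on $I$ and pairs $(S,O)$, where $S$ is a maximum independent set of $G$ and $O$ is an orientation of the edges of $G$ lying inside $V(G)\setminus S$, and then read off both claimed identities directly from this bijection. The heart of the argument is a structural lemma describing what feasible paths in $I$ can look like. Since every neighbor of an $A$-terminal $v_1'$ or $v_8'$ lies inside the $8$-cycle of $v$'s gadget, and the only way to leave that gadget passes through a $B$-terminal $w_1'$ or $w_2'$ (which is forbidden to any $A$-path), each $A$-path is confined to a single vertex gadget and connects $v_1'$ to $v_8'$. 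Because $v_1'$ and $v_8'$ each have only two neighbors on the $8$-cycle, the only two such paths are the direct chord of length $12$ and the alternating arc $v_1',v_2',\ldots,v_8'$ of total length $2+1+2+1+2+1+2=11$. The analogous analysis for $B$-paths shows that each must traverse either $u$'s cycle or $v$'s cycle via the attached length-$1$ cycle edge, yielding the $3$-edge route $w_1'\to u_i'\to u_{i+1}'\to w_2'$ (or the symmetric route through $v$).

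With this in hand, I would encode any feasible solution as an orientation of $E(G)$: direct each edge $uv$ toward the endpoint whose gadget carries the $B$-path of $uv$. A vertex $v$ is free to take the short $A$-path iff none of the interior cycle vertices $v_2',\ldots,v_7'$ is occupied by a $B$-path, iff $v$ is a source (in-degree $0$) under the orientation. Since the short route is strictly preferred in any minimum-length solution, the set $S$ of short-way vertices equals the source set, and $S$ must be an independent set of $G$ (each edge is oriented into at most one endpoint). The total length then is
\[
  11|S| + 12(|V|-|S|) + 3|E| \;=\; 12|V| + 3|E| - |S|,
\]
so minimizing length amounts to maximizing $|S|$ over independent sets of $G$, giving $|S|=\alpha(G)$ and hence $\ell_{A,B}=12|V|+3|E|-\alpha(G)$. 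For the count, fix a maximum independent set $S$; the $3\alpha(G)$ edges incident to $S$ (by cubicity plus independence) are forced to point outward, while the remaining $|E|-3\alpha(G)$ edges inside $V\setminus S$ may be oriented arbitrarily. Maximality of $S$ guarantees that every $v\notin S$ already has a neighbor in $S$ providing an in-edge, so no vertex in $V\setminus S$ accidentally becomes a source. Distinct pairs $(S,O)$ yield distinct solutions, giving $S_{A,B}=|\{\text{maximum independent sets of }G\}|\cdot 2^{|E|-3\alpha(G)}$, which rearranges to the claimed formula.

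The main obstacle I anticipate is making the structural lemma in the first paragraph fully rigorous. I would need to rule out exotic feasible configurations that exploit the global structure of $I$ --- for instance, an $A$-path that escapes its own gadget and threads through several other cycles, or a $B$-path that takes the long way around one of the $8$-cycles, possibly spanning multiple gadgets --- by systematically using that the $B$-terminals $w_1',w_2'$ block every passage between distinct gadgets for $A$-paths, that the length-$12$ chord and the alternating length-$11$ arc exhaust the simple $v_1'$--$v_8'$ paths inside one gadget, and that any detour around a cycle arc is strictly longer than the canonical three-edge $B$-path. Once this enumeration of path shapes is nailed down, the length calculation, the maximality-of-$S$ argument, and the orientation count in the second paragraph amount to routine bookkeeping against the cubicity of $G$.
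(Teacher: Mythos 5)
Your proposal is correct and follows essentially the same route as the paper: confine each $A$-path to its own $8$-cycle gadget (the $B$-terminals block all inter-gadget passages), observe that the only two options are the length-$12$ chord and the length-$11$ arc, note that the set of ``short-arc'' vertices must be independent because every edge gadget's $B$-pair needs a length-$1$ cycle edge on one of its two adjacent vertex gadgets, and then count $2^{|E|-3\alpha(G)}$ routing choices per maximum independent set. Your reformulation of the routing choices as an orientation of $E(G)$, and the explicit check via maximality of $S$ that no vertex outside $S$ becomes a source, is a cosmetic repackaging of the paper's own bookkeeping rather than a different argument.
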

\begin{proof}
  Any vertex pair in $A$ on the same vertex gadget cycle must be connected with each other through a path, since there are no paths between different vertex gadget cycles that do not also pass through a terminal in $B$. Hence there are only two possibilities for every such pair: either it is connected through the $12$-long edge between them, or it uses the path around the cycle of length $11$. Let $I\subseteq V$ be the set of vertices whose vertex gadgets uses paths of length $11$ to connect its two $A$ terminals. The set $I$ must be an independent set in $G$, since the terminals on every edge gadget must use some edge on either of the two vertex gadgets it is connected to. Moreover, any pair of terminals in $B$ cannot be connected with a path shorter than $3$ as there exist no such short paths between any pair of them. A lower bound on the attainable length of a Shortest disjoint $A,B$-paths solution is hence $12|V|-\alpha(G)+3|E|$, where $\alpha(G)$ is the size of a maximum independent set in $G$. Any such solution can naturally be interpreted as a maximum independent set in $G$ by identifying the $A$-paths of length $11$.

Moreover, from any maximum independent set $I$ in $G$, we can construct disjoint paths of this length, simply by taking the $12$-long edge for every vertex not in $I$ for a vertex gadget's $A$ terminals, and the shorter $11$-long path for the other vertex gadgets. The edge gadgets' $B$ terminals can be connected pairwise with each other through a $3$-long path using an edge on either of its two adjacent vertex gadgets, whenever it represents a vertex not in $I$. It might be possible to connect the $B$ terminals in other ways, but those paths will be of length strictly longer than $3$ as they need to use a $2$-long edge on some vertex gadget. There are precisely $|E|-3\alpha(G)$ edges with neither endpoint in $I$, and hence the maximum independent sets will be counted $2^{|E|-3\alpha(G)}$ times in the Shortest disjoint $A,B$-paths.
\end{proof}

Thm.~\ref{thm: nph} now directly follows from Lemma~\ref{lem: nph}, since if we can find $\ell_{A,B}$ in $I$, we can also compute the number of maximum independent sets in $G$ from $S_{A,B}$.

\subparagraph*{Acknowledgements.}
We thank Radu Curticapean for making us aware of the fast algorithms for determinants of matrices with a planar structure. We are also grateful for many valuable comments by anonymous referees. 
This work was supported in part by the Swedish Research Council grant VR-2016-03855, ``Algebraic Graph Algorithms''.

\end{document}